\journal{Acta Materialia}
\providecommand{\R}{\ensuremath{\mathbb{R}}}
\providecommand{\Z}{\ensuremath{\mathbb{Z}}}
\newtheorem{theorem}{Theorem}[section]
\newtheorem{lemma}[theorem]{Lemma}
\renewcommand{\vec}[1]{\boldsymbol{#1}}
\def\D{{\mathcal D}}
\providecommand{\abs}[1]{\ensuremath{\left\vert#1\right\vert}}
\definecolor{darkred}{rgb}{0.8, 0.0, 0.0}
\definecolor{darkgreen}{rgb}{0.0, 0.8, 0.0}
\definecolor{darkblue}{rgb}{0.0, 0.0, 0.8}
\providecommand{\vanish}[1]{}
\begin{document}

\begin{frontmatter}

\title{Dependence of microstructure classification accuracy on crystallographic data representation}

\author[gm]{Shrunal Pothagoni}
\ead{spothago@gmu.edu}

\author[ucd]{Dylan Miley}

\author[gm]{Tyrus Berry}

\author[ucd]{Jeremy K. Mason\corref{cor}}
\ead{jkmason@ucdavis.edu}

\author[gm]{Benjamin Schweinhart\corref{cor}}
\ead{bschwei@gmu.edu}

\address[ucd]{Department of Materials Science and Engineering, University of California, Davis, Davis, CA 95616, USA}

\address[gm]{Department of Mathematical Sciences, George Mason University, Fairfax, VA 22030, USA}

\cortext[cor]{Corresponding author}

\begin{abstract}
Convolutional neural networks are increasingly being used to analyze and classify material microstructures, motivated by the possibility that they will be able to identify relevant microstructural features more efficiently and impartially than human experts.
While up to now convolutional neural networks have mostly been applied to light optimal microscopy and scanning electron microscope micrographs, application to EBSD micrographs will be increasingly common as rational design generates materials with unknown textures and phase compositions. This raises the question of how crystallographic orientation should be represented in such a convolutional neural network, and whether this choice  has a significant effect on the network's analysis and classification accuracy.
Four representations of orientation information are examined and are used with convolutional neural networks to classify five synthetic microstructures with varying textures and grain geometries.
Of these, a spectral embedding of crystallographic orientations in a space that respects the crystallographic symmetries performs by far the best, even when the network is trained on small volumes of data such as could be accessible by practical experiments.
\end{abstract}

\begin{keyword}
Microstructure \sep Texture \sep Convolutional Neural Network
\end{keyword}

\end{frontmatter}


\section{Introduction}
\label{sec:introduction}

One of the most frequent tasks in materials science is the characterization and analysis of material microstructures.
The reason for this is the understanding that a material's microstructure is a fundamental determinant of the material's properties, and the discipline of materials science is largely concerned with detailizing this understanding.
Given the frequency of this task and the need for expert training to distinguish microstructures that can be visually similar but differ in their properties, there is substantial interest in developing machine learning approaches to efficiently and impartially classify material micrographs.
Of these approaches, convolutional neural networks stand out for their ability to operate on a material micrograph directly without requiring intermediate analysis.

Convolutional neutral networks (CNNs) effectively operate by repeatedly convolving an image with a set of image filters and pooling the results to aggregate spatial information and reduce image aliasing.
This is followed by using a fully-connected neural network on the results of the convolution and pooling layers to perform the final analysis or classification of the image.
The image filters and the parameters of the fully-connected neural network are learned by a training process where annotated images are supplied to the CNN and internal parameters are iteratively adjusted until the CNN returns the expected results.
Depending on the complexity of the images and the nature of the analysis to be performed, the training process can often require a large number of images to adequately constrain the internal parameters.
As a consequence, the comparatively small number of micrographs available in materials science, even using automated acquisition techniques, has generally confined CNNs to being applied to narrowly-defined identification tasks or the classification of micrographs among a small number of broadly-defined categories. 

While a multitude of characterization techniques offer insight into material microstructures, CNNs are generally applied to micrographs of one of three types.
Light optical microscopy (LOM) offers a large field of view and fast data acquisition, but does not directly provide crystallographic orientation or phase information.
Such information can often be inferred from optical contrast though after, e.g., appropriately etching a metallic material.
Scanning electron microscopy (SEM) using secondary electron contrast has a greater depth of focus and can achieve much higher magnifications than LOM, but at the cost of substantially longer inspection times.
While this technique also does not directly provide crystallographic orientation or phase information, the increased electron emission at edges and discontinuities serves to highlight features revealed by etching.
Finally, the scanning electron microscope can also be used to generate electron backscatter diffraction (EBSD) maps of the surface.
Primary electrons that are scattered back toward the detector are elastically diffracted by the crystal lattice along the way.
The resulting diffraction pattern indicates both the phase and crystallographic orientation of the material under the beam, and rastering the beam over the specimen generates an EBSD map where signal discontinuities indicate the presence of a phase or grain boundary.
The resulting phase and grain segmentation is often regarded as the objective ground truth for CNNs trained to perform similar analysis on more easily obtained LOM or SEM images.

As described above, CNNs have mostly been used for segmentation of complex microstructures (usually steels).
An early contribution by DeCost et al.\ \cite{decost2017uhcsdb} upsampled a CNN output to augment greyscale pixel values with information learned by the CNN about the surrounding region.
They used this to successfully segment steels containing grain boundary cementite, ferritic matrix, spheroidite particles, and Widmanstätten cementite with an accuracy of over 90\%.



CNNs
have been shown by Azimi et al.\ \cite{Azimi2018deepclassify} to be able to classify steel optical microstructures with 93.94\% accuracy. 
CNNs can be expanded to Convolution Deep Belief Networks (CDBN) using unsupervised learning on an unlabeled training set in order to find common features.
Cang et al.\ \cite{cang2017deepbeliefCNN} applied a CDBN to extract features from multiple material systems and then used the features to reconstruct the microstructure.
However, the classification capabilities of CNNs and CDBNs require careful network parameter selection and sufficient training data.
Another example by Tsutsui et al.\ \cite{tsutsui2020semcompare} was applied steel SEM microstructures with different processing paths with 85\% accuracy using a feature extraction and decision tree based machine learning method.
However, the careful selection and preparation of the training set limits these machine learning driven techniques to specific systems of microstructures and may cause them to output nonsensical results if the training sets are insufficient or biased.
The work by DeCost et al.\ \cite{decost2017uhcsdb} shows CNNs trained on features extracted from SEM images via both an intensity gradient detection technique (SIFT) and by transfer learning, which uses convolutional filters taken from neural networks that were trained on photographs (dogs, cats, etc.) rather than on materials data.
While these methods yielded classification accuracy in the low to upper 90s, the performance of their CNNs required large amounts of data and were unable to reliably form processing-structure correlations, possibly in part due to the arbitrarity of the descriptors used.
In all, these works have pushed the envelope for materials data science.
Despite the successes of these foundational works, the volume of data required will hinder widespread feasibility of these methods.

For the purposes of this paper, microstructure is defined as the information that is available at the level of individual grains or phase domains.
Since this paper is concerned specifically with the representation of crystallographic information, the microstructures considered below are further restricted to single phase face-centered cubic (FCC) materials, though extension of our results to other crystal symmetries or to multi-phase materials is straightforward.

\subsection{Microstructure Classification}

The lens through which materials have been viewed historically --- and to an extent presently --- is founded upon the fabrication pathways that have been followed. 
The classification and standardization of materials by the means through which they were created naturally results in the use of this lens and vice versa.
As our understanding of the intricacies of the microstructure have developed, we have been able to begin to engineer trajectories through the microstructure state space in order to more intentionally fabricate materials with specified structure and properties.
However, materials design and discovery is still largely confined to small modifications of previously explored pathways. 
Great efforts have been made to elucidate the myriad mysteries of the microstructure on all fronts of the field, ranging from instrumentation and fabrication to theory and mathematics.
The result of such efforts has resulted in the ability to thoroughly control, test, and characterize materials at the microstructure level and deeper. 
However, the true correlations between fabrication techniques, structure, and physical properties remain poorly understood.
To interface the aforementioned advancements and create a holistic, integrated understanding of materials it is necessary to develop generalizable materials representations and classification methodologies. 

Previously, the idea of a microstructure state space was formally introduced as a means to describe the generalized interconnection of materials and all associated quantities and qualities \cite{miley2024}.
The bridge between this concept and reality is the construction of a materials database which represents materials via truly universally characteristic aspects in a minimally reduced form.

Given that the nature and extent of the information required to fully describe the state of a material is unclear, it is necessary at the outset to clearly define the information of interest and the corresponding representation.
Arriving at such a minimally-reduced representation paves the way for inter-class compatibility and interoperability of any dataset which contains the materials represented with the same characteristic dimensions e.g. the comparison of grain structure geometry regardless of differences in chemical species. 
This representation should be minimially reduced in the sense that it retains sufficient information to usefully constrain the material properties, and perhaps even to reconstruct the material in a statistical sense.
Generalized microstructure metrics can be used to compare the microstructures in the database by giving insight into the distance or similarity between them.
Concurrently, such a database would enable holistic material categorization based upon all relevant information available for microstructures, furthering the epistemological depth of the field as a whole.
Correct means of classifying microstructures would enable the correlation of properties with specific microstructures and even with the constituent aspects of given microstructures. 
The correctness of the classification does not solely rely on the method used, but is also deeply dependent on the information that is operated on. 
Just as information shapes human conceptualization, so too does it shape machine classification and thus the ability to make useful property correlations.

An open ended question remains: is it possible to provide a unique specification for a material that is simple enough to be practicable yet retains enough information to distinguish materials classes, predict material properties, and inform processing techniques?
Neural networks may enable the answer to such a question to be found.
Convolutional neural networks (CNN) in particular are a promising tool due to their ability to extract reduced dimension features that can be used to characteristically represent images. 
CNNs have been implemented widely in science and engineering for a vast range of applications, including within materials science. 

Before considering how to remedy the small-data issues inherent to much of materials science, it may be worthwhile to consider the ways that materials are represented. 
At the level of microstructure, micrographs encode some extent of feature data into each pixel/voxel such as a grayscale integer corresponding to the intensity of detected information and correlated crystal plane orientation information given by an accompanying electron dispersive x-ray spectroscopy (EDS) scan.
These two forms of data allow offer insight into the spatial arrangement and morphology of grains with their associated crystal orientations.
Often implicitly, it is assumed that the single phase microstructures are polycrystalline, given that the single-crystal single-phase microstructure is a trivial limiting case to describe and classify.
The breadth of microstructures considered can be expanded to include multi-phase materials, though these cases are not treated in this work.

The possible crystallographic planes that arise in the microstructure correspond to the local rotational symmetry made available by the atomic ordering (FCC, BCC, HCP etc.).
The specific reference lattice for a given phase will have different extents of degeneracy for rotations of the lattice.
There can be different extents of degeneracy for materials with different reference lattices, e.g. cubic vs triclinic systems.

\subsection{Crystallographic Orientation}
\label{subsec:crystal_orientation}


Given a reference lattice, an element of the rotation group $\text{SO}(3)$ corresponds to a grain orientation. Specifically, $M \in SO(3)$ corresponds to the crystalline lattice orientation if rotating a crystal initially in the reference orientation by $M$ brings the crystal to the observed orientation. However, this representation contains redundancies: two elements of $\text{SO}(3)$ correspond to the same grain texture if they differ by the action of the point symmetry group $H$ of the crystalline lattice. Instead, crystalline orientations are in bijection with a the space of equivalence classes rotation matrices under the action of $H$. Formally, the \emph{quotient space} $\text{SO}(3)/H$ consists of equivalence classes $[Q]=\{hQ\colon h\in H\}$. $\text{SO}(3)/H$ is an example of a homogeneous space, which are the subject of an expansive literature in geometry.

If we were to represent a micrograph by an array of elements of $SO(3)$ $(Q_{i,j})_{i,j=1\ldots,N}$ (e.g. by a vector of unit quaternions), the array would exhibit both global and local symmetry. The global symmetry action of $SO(3)$ sends $(Q_{i,j})_{i,j=1\ldots,N}$ to $(P Q_{i,j})_{i,j=1\ldots,N}$ for an element $P\in SO(3)$ (that is, it acts by changing the reference direction). The local symmetry action of $H$ on $(Q_{i,j})_{i,j=1\ldots,N}$ acts individually on each grain (or even on each voxel) to permute redundant representations of the same crystallographic texture.   

While all of the examples considered here contain crystals with cubic point group symmetry, the approach naturally generalizes to arbitrary centrosymmetric crystals.
It is also applicable to data sets where the orientation is not constant on grains or ones exhibiting multiple phases. 


\subsection{Grain Data Vector Representations}
\label{subsec:GDV_reps}

We append phase information to micrographs by associating each voxel with a vector indicating the orientation of the grain that it belongs to via a function $f:SO(3)/H\to \mathbb{R}^n$ for some $n$.
We call this a \textbf{Grain Vector Data} (GVD) representation.
Note that this description assumes that grain boundaries have infinitesimal width and are defined implicitly, and that orientations are constant on grains.
We describe four different GVDs, two of which are widely used in the materials science literature.

We begin by introducing a few desired properties of a GVD.
Ideally, a GVD will provide a unique representation of crystalline orientation which depends continuously on the orientation, and so that different GVDs correspond to distinct orientations.
In mathematical language, an ideal GVD would define a continuous, injective (one-to-one) function $f: SO(3)/H$ to $\R^n.$
Observe that $h:SO(3)\to \R^n$ induces a well-defined function on $SO(3)/H$ precisely when it is $H$-invariant: $f(h^{-1} g)=f(g)$ for all $g\in SO(3)$ and all $h\in H.$
$H$-invariance is not to be confused with $SO(3)$-equivariance: a function $f:SO(3)/H\to \R^n$ has this property if there is an action of $SO(3)$ on $\mathrm{im} f$ so that rotating the input to $f$ is equivalent to applying the same rotation to the output.
For example, the function $k:\R^2\setminus \left\{0\right\}\to S^2$ assigning to a vector a unit vector in the same direction is $SO(3)$-equivariant.
Equivariance is a desirable property in machine learning, as it results in models which respect the underlying symmetry of the system; see Section~\ref{subsec:CNN_properties} for more discussion.

The Feature ID (FID) is a GVD which assigns distinct integers to each grain.
Use of this GVD allows for the implicit description of morphology and disregards crystallographic information.
It does not induce a well-defined function on grain orientation space. 

The most popular GVD in the materials science literature is the Inverse Pole Figure (IPF).
Given a specimen crystal and a reference lattice, a unit normal vector $\vec{n}$ of the specimen generically corresponds to $\abs{H}$ normal vectors in the reference, where $H$ is the point symmetry group of the lattice.
In the case of cubic symmetry, only one of these vectors is located in the geodesic triangle on $S^2$ formed by the vectors $\vec{e}_1,$ $\vec{e}_1+\vec{e}_2,$ and $\vec{e}_1+\vec{e}_2+\vec{e}_3.$
The IPF associated to the specimen crystal is then the stereographic projection of the unique vector in this triangle corresponding to $\vec{n}.$
It accounts for the redundancy created by the point symmetry group in the special case of cubic symmetry (and thus induces a well-defined, continuous function on $SO(3)/H$), but loses information by projecting a three-dimensional space to a two-dimensional one.
It is $SO(3)$-equivariant.
This is discussed further in Section \ref{sec:discussion}.

A third GVD is the quaternion parametrization (QP). The unit quaternions are in bijection with the special unitary group $SU(2)$ via the transformation
$$a\vec{1}+b\vec{i}+c\vec{j}+d\vec{k}\leftrightarrow \begin{bmatrix} a+bi & c+di \\ -c+di & a-bi \end{bmatrix}\,.$$
Identifying $\R^3$ with $\mathrm{Span}\{\vec{i},\vec{j},\vec{k}\},$ a unit quaternion $q$ acts on $v\in\R^3$  by 
$$v\mapsto qvq^{-1}\,,$$
which, by direct computation, is a rotation action. It is easily checked that this induces a $2$-to-$1$ mapping from the unit quaternions to $SO(3),$ with $q$ and $-q$ being mapped to the same element. While it is uncommonly used in the materials science community, the quaternion representation of $SO(3)$ sees frequent use in computer science, aerospace engineering, and physics due to its condensed form, completeness, and computational practicability. Note that --- without further work --- the quaternion representation does not account for the degeneracy corresponding to the action of the point symmetry group and thus does yield a well-defined function on grain orientation space. It is, however, injective in the sense that distinct quaternions correspond to distinct orientations.

The spectral embedding (SEQ) gives a unique representation for each equivalence class of crystallographic orientations under the point symmetry group via an orthogonal basis for the function space $L^2\left(SO\left(3\right)/H\right)$  $\left\{f_1,f_2,\ldots\right\}.$ Given such a basis we obtain an $n$-dimensional $GVD$ by assigning to a crystalline orientation $g\in SO(3)/H$ the vector $(f_1(g),\ldots,f_n(g)).$ Here, the functions $f_i$ are computed in terms of the coefficients of the Wigner $D$-matrices. See the appendix for an explanation of how to find such a basis in a more general setting. Equivalently, a basis for $L^2\left(SO\left(3\right)/H\right)$ may be found by identifying the unit quaternions with $S^3$ and manipulating the hyperspherical harmonics on that space \cite{MASON20086141}. 

For cubic symmetry, a direct computation shows that the first nine non-constant functions we obtain from the spectral embedding gives an immersion of $SO(3)/H$ into $\mathbb{R}^9$ (that is, a differentantable function whose Jacobian is non-degenerate). However, in practice, we use the first four non-constant basis functions for the spectral embedding.

See Table~\ref{table:GVDs} for a summary of the properties of the different representations.

\begin{table*}
\centering
\begin{tabular}{|l|c|c|c|c|c|}
\hline
Representation & FID & IPF & QP & SEQ (Full) & SEQ (truncated) \\ \hline
Dimensionality (n)              & 1   & 2   & 4  & 9          & 4               \\ \hline
Single-valued  &     & X   &    & X          & X               \\ \hline
Generically injective      &     &     & X  & X          &                 \\ \hline
Continuous      &     &  X    &   & X          &  X                \\ \hline
$SO(3)$-Equivariant      &     &  X    &X   & X          &  X                \\ \hline
\end{tabular}
\caption{\label{table:GVDs} A summary of the properties of the GVDs, where a GVD is viewed as a (multi)function $f:SO(3)/H\to \mathbb{R}^n$ and $H$ is the cubic symmetry group. We say that such a function is single-valued if it assigns a single element of $\R^n$ to each element of $SO(3)/H.$ Note that the full spectral embedding is an immersion, which implies that it is injective except on a subset of $SO(3)/H$ of measure zero.}
\end{table*}



\subsection{Convolutional Neural Networks}
\label{subsec:CNNs}

A Convolutional Neural Network (CNN) is a standard tool from the field of computer vision often used for image classification tasks \cite{aghdam2017guide}. 
The basic CNN architecture consists of two main components: the feature extraction blocks and a multilayer perceptron (MLP).


As the terminology suggests, the \textit{convolution operator} is fundamental to  CNN architecture, and in particular to the feature extraction process.
It is a discrete analogue of the classical notion with the same name: given a signal $f(x)$ and kernel $k(x)$, a convolution operator is defined by \[ [f\star k](t) = \int_{-\infty}^{\infty} f(t-\tau)k(\tau)d\tau \]
For a fixed $t,$ observe that $[f\star k](t)$ is the inner product of $f(t-\tau)$ and $k(\tau)$ in the space of square integrable functions.
As a result, the convolution  operator may be interpreted as a continuous measure of the local similarities of $f(t)$ with $k(t)$ as we translate the kernel over the signal.
Analogously, the \textit{discrete convolution operator} is defined on a 2D array (an image composed of pixels) $f(x,y)$ with filter $k(x,y)$ by \[\label{eqn:conv}[f \star k](x,y)=\sum_{dx \in \Z}{\sum_{dy \in \Z}{ f(x-dx,y-dy)k(dx,dy)}} \tag{1}\]
Intuitively, this discrete operator acts on the 2D array by taking the inner product of $k(x,y)$ with $f(x,y)$ as $k(x,y)$ slides across $f(x,y)$.
\footnote{As $f(x,y)$ is a scalar valued function, the provided convolution operator definition is only compatible for grey-scaled images.
This definition can be further generalized for images with multiple color maps such as RGB or CMYB by instead representing $f(x,y)$ as a vector valued function, however, this will alter the equation slightly as we would need to use distinct kernels over each of the corresponding color maps.}
It is important to note that the parameters of $k(x,y)$ are fixed, known as \textit{weight sharing} \cite{pmlr-v48-cohenc16}, and it enables the convolution operation to detect features in an image regardless of locality.
the resulting image after convolution is known as a \textit{feature map}.
The resulting feature map is then subsampled (pooled) to reduce the reliance of exact positioning of features.
This process of convolving and pooling an image as illustrated in Figure \ref{fig:conv_pool} forms an individual feature extraction block.
Most CNN architectures use multiple blocks to learn a multitude of distinct higher order features that may be present in an image \cite{726791}. 

Once the image has passed through a series of feature blocks, it is flattened and propagated forward through the MLP.
The MLP is explicitly a composition of affine and nonlinear transformations of the from $\sigma(\textbf{W}_i\textbf{x}+\textbf{b}_i)$ where \[\textbf{W}_i \colon \R^m \to \R^n\] and $\textbf{b}_i \in \R^n$.
Geometrically, as the data propagates through the MLP, the composition of affine and nonlinear transformations acts to stretch and pulling apart the data into distinct clusters which are then separable by hyperplanes. 

The CNN's ability to accurately classify data is predicated on the right choice of $\textbf{W}_i$, $\textbf{b}_i$, and $k(x,y)$.
However, it is impossible to determine how to initialize these parameters.
Instead, the parameters are randomly initialized are then fined tuned using a method known as \textit{backpropagation}.
A comprehensive analysis of backpropagation is available in \cite{schmidhuber2015deep}, but in short, backpropagation is a gradient based method which updates the model parameters by minimizing the error of the loss function.

\begin{figure}
    \centering
    \includegraphics[width=\columnwidth]{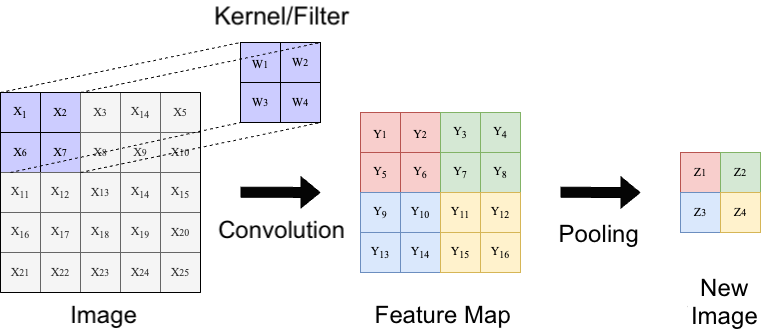}
    \caption{\label{fig:conv_pool} General pipeline for how an image is processed in the feature extraction block of a CNN.}
\end{figure}

\subsection{Invariant and Equivariant Machine Learning}
\label{subsec:CNN_properties}

The benefit of using a CNN is that the convolution operator defined in (\ref{eqn:conv}) is \textit{translationally equivariant} in the sense that translations applied to the input are mapped to the output.
To make this precise, given a pixel array $f(x,y)$ and a translation $g\in \Z^2,$  denote by $[T_gf](x,y)$ the image whose pixels are translated by $g.$
Then \[[[T_g f]\star k](x,y) = [T_g[f\star k]](x,y)\,.\]
In words, the convolution of the translate is the translate of the convolution, a consequence of the  aforementioned weight sharing property.
It is also desired that translations of objects applied to our image do not effect the classification of the image. More explicitly, we say that a classifier $\Phi$ on the space of 2D image arrays is \textit{translationally invariant} if $\Phi(f(x,y))$ satisfies, \[\Phi(T_g  f(x,y)) = \Phi(f(x,y))\]
for all images $f(x,y).$ 

These properties have led to the development of a subfield of machine learning known as equivariant machine learning.
It aims to generalizes the CNN construction to create machine learning models that are equivariant with respect to general group actions.
However, the specific architecture proposed in \cite{pmlr-v48-cohenc16, cohen2019general} is not directly applicable to micrographs because they are not themselves elements of an $SO(3)$-homogeneous space but rather arrays of elements of the $SO(3)$-homogeneous space $SO(3)/H$ ($SO(3)$ acts on this array, but not transitively).
Instead, we generalize the spectral approach of \cite{esteves2018learning} to define a Grain Vector Data representation that is both $H$-invariant (redundancy free) and $SO(3)$-equivariant.
By training a traditional CNN on micrographs represented using this GVD, we obtain models that are equivariant to both the $SO(3)$-action on the reference lattice and the $\Z^2$-action of translations.

\subsection{Summary of Results}
\label{subsec:results_summary}

The aim of our study is to compare the performance of four different GVD representations: FID, IPF color, phase-quaternion, and spectral embedding.
For each such representation, we optimized multiple CNN architectures for a classification task on a data set consisting of five synthetic materials of differing grain geometry and texture.
Our results demonstrate that the spectral embedding yields the best results with respect to classification accuracy, even for limited sample sizes.

\section{Synthetic Microstructures}
\label{sec:synthetic}

The data sets studied in this paper are synthetic microstructures generated by the software package  DREAM.3D \cite{groeber2014dream}.  It samples random 3D microstructures matching prescribed morphological and crystallographic parameters. \,  
Specifically, a 3D grid of voxels is segmented into a grain network in which each voxel is labeled with crystal orientation and phase information.
The morphology of the grain network is largely determined by an equivalent sphere diameter (ESD) distribution and an axis-orientation distribution function (ODF), which govern the average grain size and growth anisotropy, respectively.
Crystal orientation is assigned based upon an ODF over the stereographic projection of crystal planes, described and plotted graphically on pole figures with respect to 3 zone axes: {$\langle$}001{$\rangle$}, {$\langle$}011{$\rangle$}, and {$\langle$}111{$\rangle$}. 
The orientation of neighboring grains are controlled via a misorientation distribution function.
Combined, these parameters allow for the generation of varying degrees of crystallographic texture.

\begin{figure}
    \centering
    \subfloat[Equiaxed \label{equiaxed}]{\includegraphics[width=2.8cm]{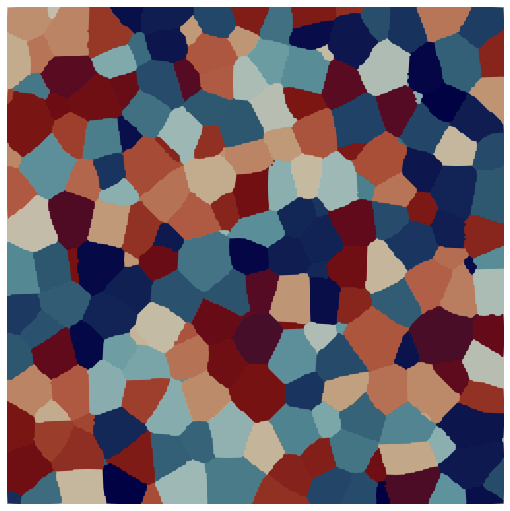}
    }
     \subfloat[Rolled \label{rolled}]{\includegraphics[width=2.8cm]{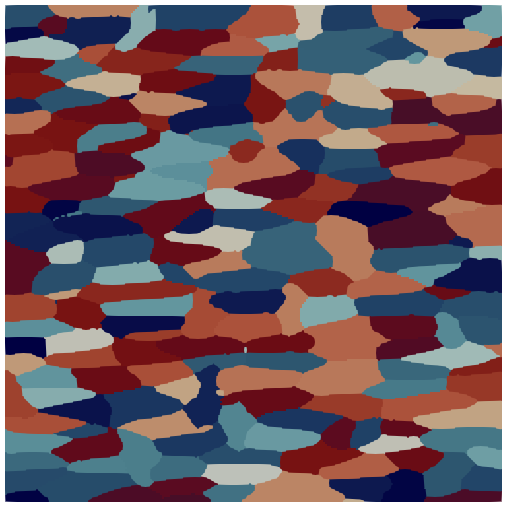}}
     \subfloat[Bimodal \label{bimodal}]{\includegraphics[width=2.8cm]{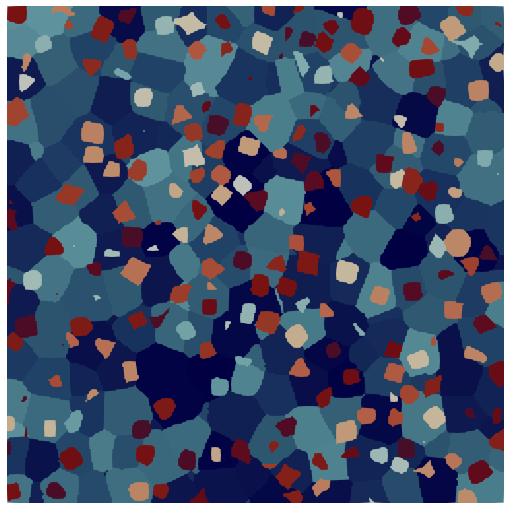}}\\
     \subfloat[Untextured \label{untextured}]{\includegraphics[width=5.6cm]{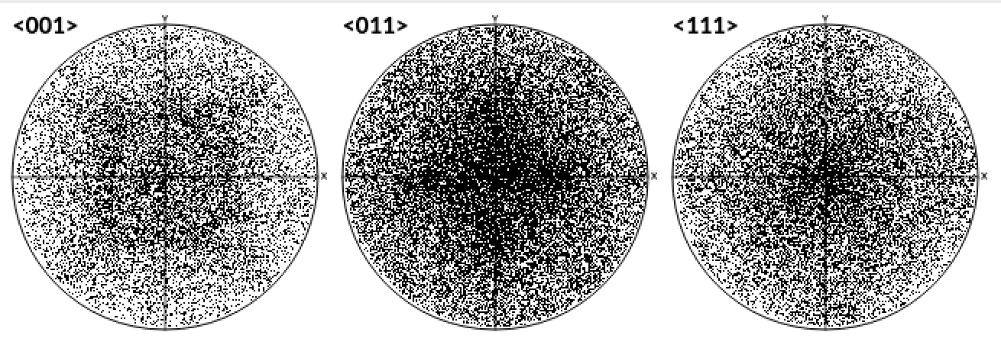}}\\
     \subfloat[EI-Cu Texture \label{cu}]{\includegraphics[width=5.6cm]{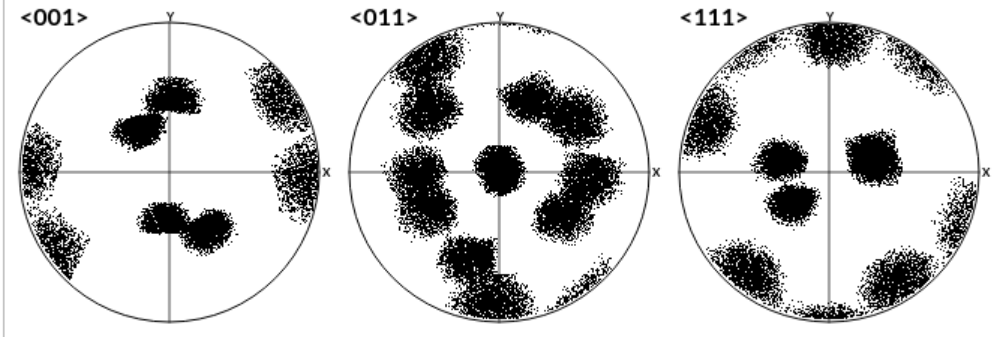}}\\
     \subfloat[EI-BG Texture \label{bg}]{\includegraphics[width=5.6cm]{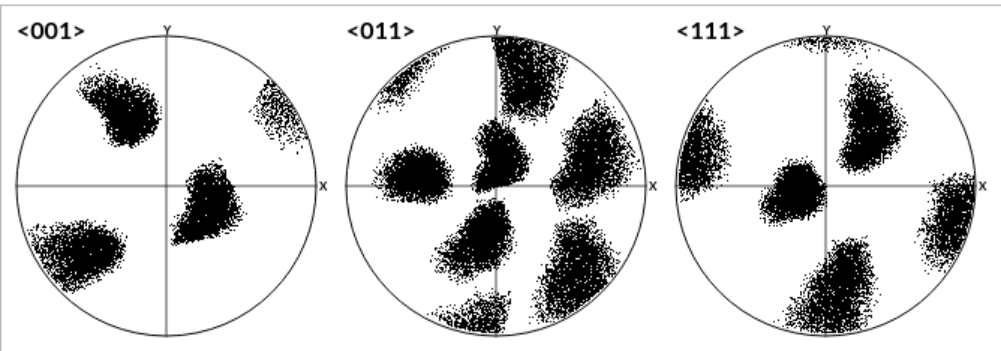}}
    \caption{Example grain morphologies and textures from the synthetic microstructure data set.}
    \label{fig:enter-label}
\end{figure}

A set of five structures were generated with DREAM.3D to explore the performance of the various GVDs for classification tasks.
Three of the five materials are untextured and have morphologies corresponding to i) equiaxed (EI, Fig.~\ref{equiaxed}) ii.) rolled (RI, Fig.~\ref{rolled}) and iii.) bimodal (BI, Fig.~\ref{bimodal}) structures.
The other two have morphologies corresponding to EI structures; we call them EI-Cu and EI-BG. 
The ESD is the same for each material, except the secondary phase of BI which has roughly half of the ESD of the matrix and occupies 25\% of the total volume. 
RI is heavily biased to grow along specific morphological axes.
Different crystallographic texture was introduced for the final two materials, EI-Cu and EI-BG. They are depicted in Figures~\ref{cu} and~\ref{bg}, respectively. 
The grain orientations for EI-Cu arise from an ODF which combines a set of three textures; “copper”, “brass”, and “S”. These textures can be described by the Euler angle sets (90°, 35°, 45°), (35°, 45°, 0°) and (59°, 37°, 63°). 
The EI-BG texture is based on a combination of the “brass” and “Goss” textures, which have Euler angles of (35°, 45°, 90°) and (0°, 45°, 90°), respectively. 
These materials are intended to force the latent microstructure representation learned by the CNN to include grain orientation information, rather than allowing the CNN to rely entirely on grain morphology information.

\begin{figure}
    \centering
    \includegraphics[width=\columnwidth]{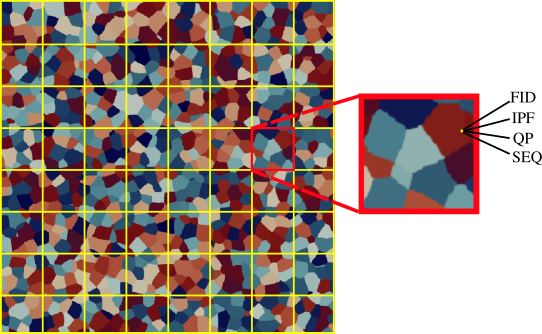}
    \caption{$512 \times 512$ slices of a synthetic material are sampled via 8 level-set intersections, creating $64$ $64\times 64$ square windows from each slice.
    Slices of the sample are taken every 10th voxel along the z-direction, which is a rough approximation of the average ESD of the matrix.
    Each voxel within each window has 4 associated GVDs (FID, IPF, QP, and SEQ).}
    \label{fig:subimg}
\end{figure}

We assess the suitability of the four GVDs by training CNNs on micgrographs augmented with each representation. We train and optimize the models separately for each GVD, and on four separate batches of data that are increasing in sample size.
The batches of data are generated as illustrated in Figure \ref{fig:subimg} using increasing numbers of level-set slices.
We expect that the CNN will identify and learning the underlying geometry and texture of the dataset at different rates depending on the GVD representation.
Determining this rate of convergence among the GVD is necessary as physical micrograph data sets obtained by EBSD are generally limited in size due to the milling techniques that are used to remove material layers.
Hence, the best GVD based representations is one allows the CNN to most accurately distinguish between material classes with few training samples. 

The limitations imposed on classification mechanisms by material representation are investigated via training the CNN architecture on the 4 different GVDs described in section \ref{subsec:GDV_reps}. GVDs which only consider grain structure (that is, the FID  labels) will be unable to distinguish the three structures which differ only in crystalline texture, despite the immense  possible differences in properties.

CNNs trained on data sets using any of the GVDs will be able to detect variations in geometry. 
Even though there is no explicit label stating that voxels belong to one grain or another, grain boundaries are represented by adjacent voxels with differing orientation descriptor values.

\section{CNN Methodology and Results}
\label{sec:cnn}

\subsection{Data processing}

The five classes of synthetically generated materials were retrieved from DREAM.3D as text files with their associated GVD representations, except the spectral embedding. That GVD was computed from the quaternion coordinates. 
\footnote{All of the code used to generate and preprocess the data is available for public use through our GitHub: \hyperlink{code}{https://github.com/Shrunalp/cnn-mat.git}.}

We represent the material as a four-dimensional array, where the first three indices correspond to the $x,y,$ and $z$ position and the last four give the GVD at that point in space.
The coordinates used to describe each voxel is of an arbitrary physical size, however, they do represent the typical resolution of an EBSD scans' range, between 20-50 nm \cite{GINTALAS2022648}.
We call this array a \textit{block form} representation of the material.
In order to compare the different GVDs using the same network architecture, we modify the input data so that the dimensionality of each representation is the same.
To accomplish this, the FID integer is duplicated four times and the IPF coloring is duplicating twice, resulting in a four-dimensional vector.



For each material class, we independently generate a synthetic microstructures of dimension $512\times 512 \times N \times 16$ with \[N \in \{128, 256, 512, 1024\}.\] Upon creation, the array is then segmented into four distinct $512\times 512\times N \times 4$ arrays for each material class, each with a distinct GVD representation. 

Each block form array is transformed into 2D image data by selecting $512 \times 512 \times 4$ slices (where the last four coordinates correspond to the GVD data) at increments of $10$ along the $z$-axis.
The spacing is chosen to minimize statistical dependence between cross-sections as the average grain diameter is 10 pixels.
These images are then segmented into 64 separate \, $64\times 64\times 4$ sub-images as demonstrated in Figure \ref{fig:subimg}. 
All the sub-images are split into a $64/16/20$ training, validation, and testing set. 


\subsection{Architecture design}
\label{subsec:architecture_design}

Given that there are four distinct GVD representation datasets, we optimize and adapt our CNN model architecture to each specific representation for a fair comparison.
This is accomplished by performing a hyperparameter sweep on the CNN architectures of each block dataset and their associated GVD representation.

\begin{figure}
    \centering
    \includegraphics[width=\columnwidth]{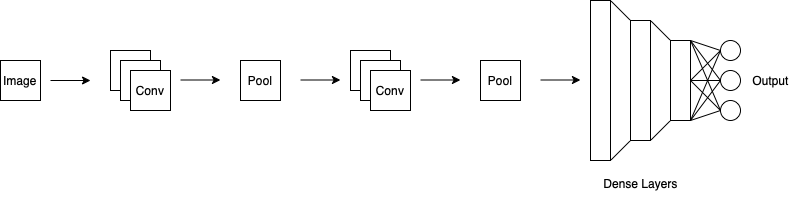}
    \caption{\label{fig:cnnarch} General CNN model framework used for classification.}
\end{figure}

The CNN models used in these experiments are created using the popular machine learning library TensorFlow \cite{tensorflow2015-whitepaper}.
We require that all models must include two separate convolution and pooling layers and three dense fully connected layers in the MLP as illustrated in Figure \ref{fig:cnnarch}.
We also fix the size of all of the convolution and pooling layers with filters to be $3 \times 3$ and use the ReLu activation function in all but the last layer.
The last layer then utilizes the softmax activation function for classification.
We employ the popular stochastic optimization algorithm Adam \cite{kingma2014adam} for training the CNN models with a fixed learning rate of $\alpha = 0.001$.
The  loss function for our multiclass image dataset is categorical cross entropy which is given by the formula \[CE = - \sum_{i=1}^{N}y_i\log(\hat{y_i})\] where $y_i$ is the true value and $\hat{y_i}$ is the predicted value.
Lastly, an early stopping condition with patience $p=5$ is imposed to optimize variance and bias tradeoff. 


\subsection{Hyperparameter sweep}
\label{subsec:hyperparameter_sweep}

All other model hyperparameters (e.g., filter size and the dimensions of the dense layers, dropouts, etc) are determined using a \textit{hyperparameter sweep} which searches for the best possible hyperparameter values for each specific GVD and sample size.
The ranges of possible hyperparameter values are listed in Table \ref{tab:hyperpar}.
We initialized 50 unique CNN models by employing a randomized grid search over the hyperparameter space for each of the four separate GVD represented datasets.
This procedure is repeated each $N$ (the size of the data set) resulting in a total of 800 independently trained models visualized in Figure \ref{fig:hypersweep}.

\begin{table}[ht]
\centering
\renewcommand{\arraystretch}{1.2}
\scalebox{0.55}{
\begin{tabular}{c|c|c|c|c||c|c|c}
\toprule
 \multicolumn{5}{c||}{\textbf{Architecture Hyperparameters}} & \multicolumn{3}{c}{\textbf{Regularizers }}  \\
\cmidrule{1-4}\cmidrule{5-8}
 \textbf{Filter \#1} & \textbf{Filter \# 2} & \textbf{Layer \# 1} & \textbf{Layer \# 2} & \textbf{Layer \# 3} & \textbf{Dropout} & \textbf{$L_1$ } & \textbf{$L_2$ }\\
\midrule
   &    &     &     &     &  0.1  & 0.1   & 0.1  \\
8  & 8  & 128 & 64  & 32  &  0.2  & 0.2   & 0.2  \\
16 & 16 & 256 & 128 & 64  &  0.3  & 0.3   & 0.3  \\
32 & 32 & 512 & 256 & 128 &  0.5  & 0.5   & 0.5    \\
   &    &     & 512 & 256 &  0.75 & 0.75  & 0.75    \\
   &    &     &     &     &       & 1.0   & 1.0    \\
\bottomrule
\end{tabular}
}
\caption{List of possible hyperparameters used for random grid sweep.}
\label{tab:hyperpar}
\end{table}


\begin{figure*}
    \includegraphics[width=1\textwidth]{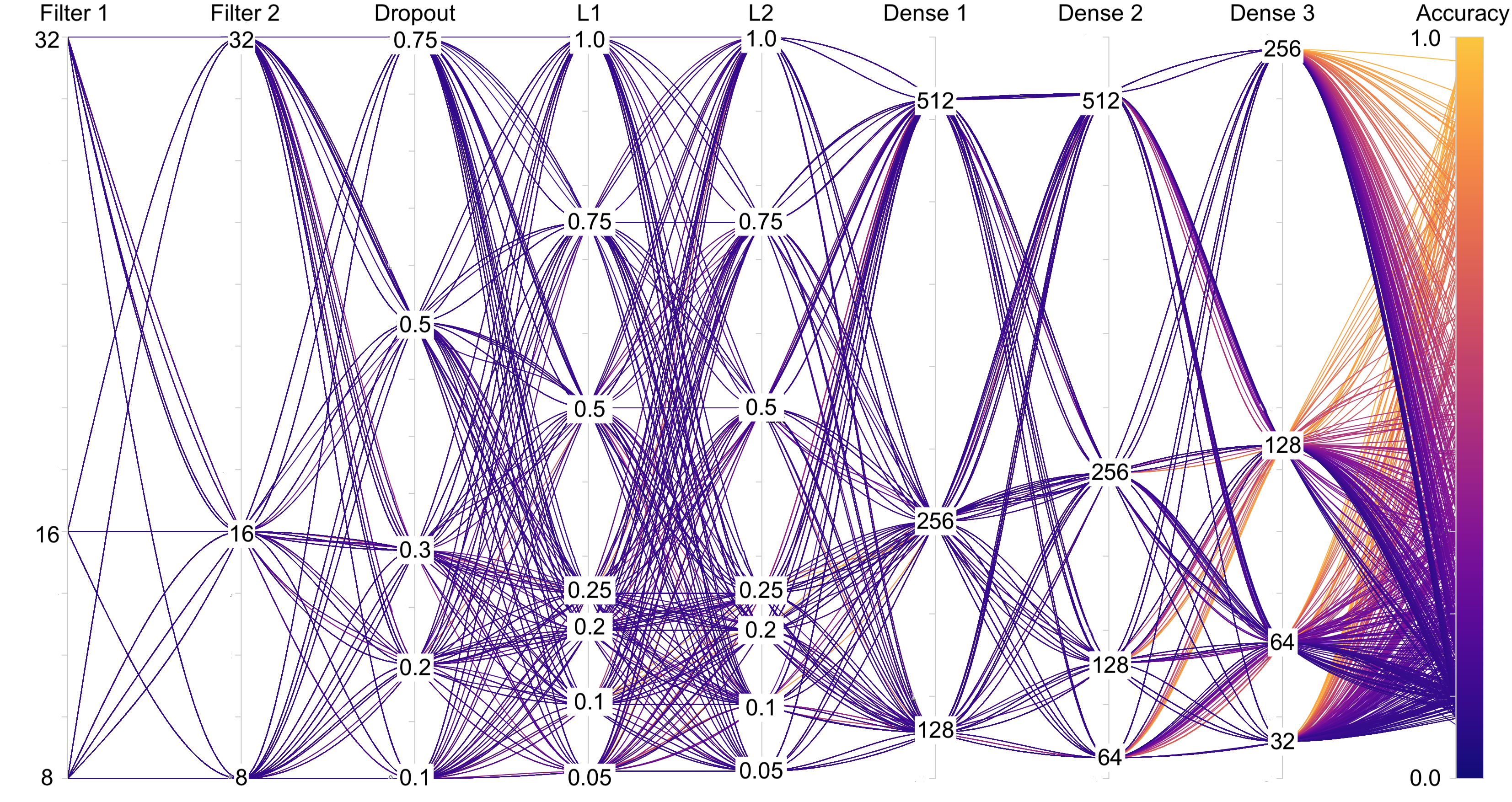}
    \setlength{\abovecaptionskip}{-5pt}
    \setlength{\belowcaptionskip}{-5pt}
    \caption{The figure above is a visualization of the hyperparameter sweep performed over the four voxel-associated data representations.
    Each curve represents a single model and its hyperparameters correspond to the intersection of the curve with the vertrical axes.
    Curves that are closer in shape to yellow correspond to models that yield high model performance, with respect to validation accuracy, whereas purplish curves are sub-optimal. Image created using Weights and Biases \cite{wandb}.}
    \label{fig:hypersweep}
\end{figure*}

\begin{table*}[h]
\centering
\renewcommand{\arraystretch}{1.2}
\scalebox{0.75}{
\begin{tabular}{c||c|c|c|c|c||c|c|c||c|c|c }
\toprule
 \multirow{2}{*}{\textbf{GVD}} &
 \multicolumn{5}{c||}{\textbf{Architecture Hyperparameters}} & \multicolumn{3}{c||}{\textbf{Regularizers }} & \multicolumn{3}{c}{\textbf{Model Performance }}  \\

\cmidrule{2-6}\cmidrule{7-10}\cmidrule{11-12}
 &\textbf{Filter \#1} & \textbf{Filter \# 2} & \textbf{Dense \# 1} & \textbf{Dense \# 2} & \textbf{Dense \# 3} & \textbf{Dropout} & \textbf{$L_1$ } & \textbf{$L_2$ } & \textbf{Train} & \textbf{Val}& \textbf{Test}\\
\midrule
\text{FID} 12 & 8 & 8  & 512 & 256  & 32  &  0.3 & 0.01 & 0.25 & 53.6\% & 49.3.9\% & 56.9\% \\
 \text{IPF} 12 & 32 & 8 & 256 & 64 & 64  &  0.75 & 0.01 & 0.5 & 56.5\% & 59.9\% & 55.4\% \\
 \text{QP} 12 & 8 & 16 & 512 & 64 & 256 &  0.1 & 0.25 & 0.5 & 54.6\% & 57.8 \% & 57.6\% \\
 \text{SEQ} 12 & 32 & 8 & 128 & 64 & 64 &  0.1 & 0.01 & 0.1  & 96.9\% & 96.6\% & \textbf{98.8\%} \\
 \text{FID} 25 & 32 & 16  & 128 & 64  & 128  &  0.2 & 0.01 & 0.1 & 54.6\% & 45.7\% & 55.9\% \\
 \text{IPF} 25 & 8 & 16 & 256 & 128 & 256  &  0.2 & 0.01 & 0.1 & 95.6\% & 65.2\% & 86.8\% \\
 \text{QP} 25 & 32 & 8 & 256 & 256 & 256 &  0.3 & 0.01 & 0.1 & 57.1\% & 57.7\% & 57.8\% \\
 \text{SEQ} 25 & 16 & 16  & 256  & 512 & 128 &  0.5 & 0.5 & 0.5  & 93.3\% & 94.8\% & \textbf{96.3\%} \\
 \text{FID} 51 & 32 & 16 & 128 & 256  & 128  &  0.2 & 0.01 & 0.1 & 48.8\% & 46.7\% & 50.1\% \\
 \text{IPF} 51 & 32 & 32 & 512 & 64 & 64  &  0.1 & 0.01 & 0.2 & 94.4\% & 71.0\% & 89.0\% \\
 \text{QP} 51 & 8 & 8 & 512 & 256 & 32 &  0.3 & 0.1 & 1.0 & 54.3\% & 43.7\% & 56.4\% \\
 \text{SEQ} 51 & 16  &  16  &  128 & 512 & 256 &  0.5 & 0.1 & 0.01  & 94.8\% & 95.9\% & \textbf{97.69\%} \\
 \text{FID} 102 & 8 & 16  & 128 & 256  & 128 &  0.3 & 0.01 & 0.25 & 50.7\% & 43.8\% & 49.89\% \\
 \text{IPF} 102 & 8 & 32 & 256 & 64 & 128  &  0.2 & 0.1 & 0.25 & 90.8\% & 90.5\% & 90.1\% \\
 \text{QP} 102 & 8 & 32 & 512 & 128 & 256 &  0.2 & 0.2 & 0.1 & 55.7\% & 55.7\% & 54.8\% \\
 \text{SEQ} 102 &  32 &  32  & 512  & 128 & 32 &  0.1 & 0.1 & 1.0  & 96.1\% & 97.8\% & \textbf{97.5\%} \\
\bottomrule
\end{tabular}
}
\caption{The table above highlights the architecture of the best performing model for each of the voxel-associated data and their corresponding sample size. We bold the testing accuracy for the four best performing models for each sample size. The results demonstrate that the SEQ is consistently the best performing data type.}
\label{tab:dataset_table}
\end{table*}


\subsection{Results}

\begin{figure}
    \centering
    \includegraphics[width=\columnwidth]{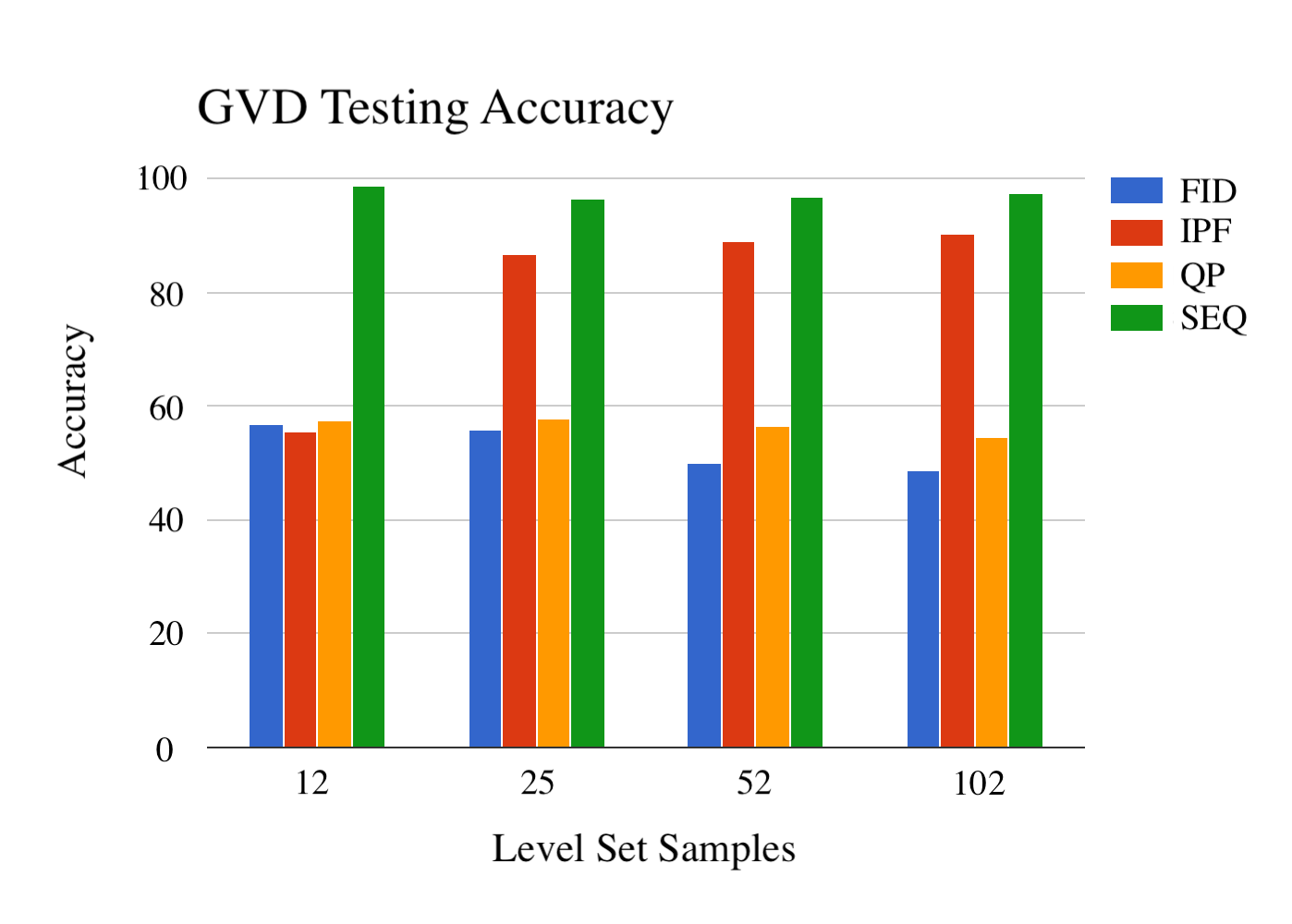}
    \caption{\label{fig:bargraph} Each bar in the graph corresponds to the best testing accuracy of each GVD representation with increasing sample size. In all cases the SEQ representation yields the best result regardless of sample size.}
\end{figure}

The results of our study are summarized in Table \ref{tab:dataset_table} and Figure \ref{fig:bargraph}, which displays the optimal hyperparameters for each GVD and data sample size, together with the training, validation, and test accuracies.
Our analysis indicated that the spectral embedding representation (SEQ) of the data consistently outperforms all other GVD representations in all metrics.
The second best performing GVD representation is the IPF; however, this performance is only accessible with sufficiently large sample size.
This suggests that IPF is not a viable alternative when data scarcity is an issue (that is, when working with physically generated image data rather than synthetic data).


\section{Discussion}
\label{sec:discussion}

It is readily apparent from Figure~\ref{fig:bargraph} that the FID \, parametrization of the microstructure is inadequate for the classification task. This can be ascribed to the fact that FID can only identify geometric information, and is blind to crystalline orientations.
A CNNs able to perfectly characterize the geometric features of these materials would accurately classify the materials with rolled and bimodal structures, but would randomly assign the three equiaxed materials to classes, resulting in an overall expected testing accuracy of 60\%. This is consistent with the results we obtained.
Regardless, it is clear that additional information is necessary to parse out the texture of the microstructure. 

In contrast to the FID, the CNNs trained on IPF representations improve substantially with increased volumes of data. 
The testing performance essentially saturates for a total data pool of 25 slices. 
It is possible that despite further data increases by 2, 4, 6, and 8- fold, the lower performance of IPF relative to SEQ arises due to the inherent limitaitons of the former.
While the IPF is widely adopted and convenient, it loses information by projecting the three-dimensional homogeneous space $SO(3)/H$ to two dimensional space. 
The CNN will inherently see IPF representations of these two planes as degenerate. In this way, IPF values are unable to fully capture the texture of a microstructure without additional information.

Given that quaternions faithfully represent the orientation of bodies in three-dimensional space, the poor performance of the CNNs trained on QP data may be somewhat surprising. This is likely due to the redundancies in the QP representation. A single crystalline orientation class corresponds to 48 distinct QP representations. By neglecting to mod out for rotational symmetry, we require the CNN to learn that each of these representations --- which are distant elements of the three-sphere $S^3$ --- are in fact the same. 


The performance of the SE far exceeds that of the other representations. The CNNs trained on SE data were able to classify all five structures with testing accuracies in the upper-90s for all quantities of data shown. While the IPF performs almost as well for the three larger data sets, the data set size requirements for the SE representation are an order of magnitude smaller. This dramatic increase in performance can be explained by the fact that the SE captures significantly more information by projecting $SO(3)/H$ into a four-dimensional space rather than a two-dimensional one. 

Though synthetic data was used for these exploratory studies, the availability of real microstructure data can be quite limited. 
Materials data acquisition beyond surface characterization is often destructive of the sample at hand, requiring fabrication of new samples for each scan.
The fabrication of new samples can be both monetarily and temporally expensive.
The act of acquiring materials data is also often subject to the same costs.

The ability to accurately classify materials in the data-limited regime is paramount.
Assuming that 90\% accuracy is adequate for a given purpose (which is often not the case), the acquisition of 100 EBSD slices necessitated for the IPF representation will be prohibitive for a general use case.
For these reasons, the SE is strictly preferable to the alternatives.

\label{sec:conclusion}


\section*{Acknowledgments}
S.P., T.B., and B.S.\ are grateful for the support of the National Science Foundation under Award No.\ 2232967.
D.M.\ and J.K.M.\ are grateful for the support of the National Science Foundation under Award No.\ 2232968.

\appendix

\section{The Spectral Embedding}

We explain how to find an orthonormal basis for $L^2\left(G/H\right)$ where $H$ is subgroup of a compact Lie group $G.$ First, we recall a  well-known theorem from representation theory.
\begin{theorem}[Peter–Weyl Theorem for Compact Groups] Every function $f \in L^2(G)$ has a “Fourier series,” converging in the $L^2$ sense, \[f= \sum_{\alpha \in \widehat{G}} \sum_{i,j=1}^{\text{dim}\,\D^{\alpha}}c_{i,j}^{\alpha}\D_{i,j}^{\alpha}\] where the $\D^{\alpha}$ are unitary representatives of the classes of inequivalent irreducible representations of $G$, the $\D^{\alpha}_{i,j}$ are their matrix coefficients in orthonormal bases, and \[c^{\alpha}_{i,j} = (\text{dim}\,\D^{\alpha})\, \langle f,\D_{i,j}^{\alpha} \rangle  = (\text{dim}\,\D^{\alpha})\, \int_{G}f(g)\overline{\D_{i,j}^{\alpha}(g)}dg \] In other words, the functions $\D_{i,j}^{\alpha}$ form an orthonormal basis for $L^2\left(G\right).$ 
\end{theorem}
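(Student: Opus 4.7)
The plan is to follow the classical route for Peter--Weyl via compact operator theory applied to the regular representation. First I would recall that a compact group $G$ admits a unique bi-invariant Haar probability measure, so the left and right regular representations $L,R$ of $G$ on $L^2(G)$ are well defined, unitary, and strongly continuous. I would then show that every irreducible unitary representation of $G$ is finite-dimensional: given such an irreducible $\pi$ on a Hilbert space $V$, I would pick a unit vector $v \in V$ and consider the operator $A = \int_G \pi(g)\, v v^* \pi(g)^{-1}\, dg$; this is a nonzero positive trace-class operator commuting with $\pi$, so by Schur's lemma it is a multiple of the identity, forcing $\dim V < \infty$.

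Next I would derive the Schur orthogonality relations. For inequivalent irreducible unitary representations $\D^\alpha, \D^\beta$ and any linear operator $A$, the average $\int_G \D^\alpha(g)\, A\, \D^\beta(g)^{-1}\, dg$ is an intertwiner, hence zero if $\alpha \ne \beta$ and a scalar multiple of the identity if $\alpha = \beta$. Specializing $A$ to rank-one operators and computing the trace yields
\[
\int_G \D^\alpha_{i,j}(g)\, \overline{\D^\beta_{k,l}(g)}\, dg \;=\; \frac{1}{\dim \D^\alpha}\, \delta_{\alpha\beta}\, \delta_{ik}\, \delta_{jl},
\]
which shows that the claimed system is orthonormal after the stated normalization and gives the formula for the Fourier coefficients $c^{\alpha}_{i,j}$ as orthogonal projections.

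The main obstacle is completeness: showing that the closed span $\mathcal{M}$ of all matrix coefficients is all of $L^2(G)$. I would argue by contradiction, assuming $\mathcal{M}^\perp \neq 0$. For any real-valued symmetric $\phi \in C(G)$ the convolution operator $T_\phi f = \phi * f$ is self-adjoint and Hilbert--Schmidt (hence compact) on $L^2(G)$ and commutes with right translations, so $\mathcal{M}^\perp$ is $T_\phi$-invariant and bi-invariant. Choosing $\phi$ to be a nonnegative bump concentrated near the identity with $\int \phi = 1$ ensures that $T_\phi$ is close to the identity on continuous functions, so $T_\phi$ restricted to $\mathcal{M}^\perp$ has a nonzero eigenvalue. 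The corresponding eigenspace is finite-dimensional and right-invariant, hence contains a finite-dimensional irreducible subrepresentation of $R$. But the matrix coefficients of such a subrepresentation lie in $\mathcal{M}$ by definition, contradicting $\mathcal{M}^\perp \subset \mathcal{M}^\perp$. The delicate step is verifying that $T_\phi$ has nonzero spectrum on $\mathcal{M}^\perp$, which requires the approximate-identity estimate together with the observation that $\mathcal{M}^\perp$ contains continuous functions (obtained by convolving with $\phi$). Once completeness is established, the Fourier series converges in $L^2$ with the stated coefficients, and the basis statement follows.
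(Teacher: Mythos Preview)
Your sketch is the standard compact-operator route to Peter--Weyl and is essentially correct: finite-dimensionality of irreducibles via an averaged rank-one projection and Schur's lemma, the Schur orthogonality relations via averaged intertwiners, and completeness by applying the spectral theorem for compact self-adjoint operators to convolution by a symmetric approximate identity on the bi-invariant complement $\mathcal{M}^\perp$. The only blemishes are cosmetic. The final clause ``contradicting $\mathcal{M}^\perp \subset \mathcal{M}^\perp$'' is a typo; you mean that the finite-dimensional irreducible $R$-subspace $W$ you produced lies simultaneously in $\mathcal{M}$ (because its elements are linear combinations of matrix coefficients, via $w_j(x)=\sum_k [\D^\alpha(x)]_{kj}\,w_k(e)$) and in $\mathcal{M}^\perp$. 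Also, the remark about needing continuous functions in $\mathcal{M}^\perp$ is unnecessary, since $\phi * f \to f$ already holds in $L^2$ for any $f$, which is all you need to force $T_\phi|_{\mathcal{M}^\perp}\neq 0$ for some $\phi$.

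There is, however, nothing to compare your argument to: the paper does not prove this theorem. It is quoted in the appendix as a ``well-known theorem from representation theory'' and used as a black box; the paper's actual work begins afterward, where it $H$-averages the matrix coefficients $\D^{\alpha}_{i,j}$, Gram--Schmidt-orthonormalizes within each isotypic block, and invokes Peter--Weyl only to conclude that the resulting system spans $L^2(G/H)$. Your proposal therefore supplies a proof the paper deliberately omits.
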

Observe that if $f\in L^2\left(G\right)$ then the $H$-averaged function  
\[ \tilde{\D}^{\alpha}_{ij} = \frac{1}{\mu(H)} \int_{H} h\cdot \D^{\alpha}_{ij} \, d\mu  \] 
is $H$-invariant and thus descends to a well-defined function on $G/H.$ Here,  $\mu$ is the Haar measure on $H$ induced by the one on $G$ (when $H$ is finite, $\mu$ is the uniform measure on $H$). An orthonormal basis for $L^2\left(G/H\right)$ is constructed as follows. For each irreducible representation $\alpha\in\widehat{G},$ use the Gram--Schmidt process to find an orthonormal basis $f^{\alpha}_1,\ldots f^{\alpha}_{n\left(\alpha\right)}$ for the span of the $H$-averages of matrix coefficients of $\alpha,$ $\mathrm{span}\left(\tilde{\D}^{\alpha}_{i,j}\right).$ We claim that $\left\{f^{\alpha}_1\ldots, f^{\alpha}_{n\left(\alpha\right)}\right\}_{\alpha\in\widehat{G}}$ is an orthonormal basis for $L^2\left(G/H\right).$ These functions span $L^2\left(G/H\right)$ by the preceding theorem, for if $f\in L^2\left(G/H\right)$ then $f$ pulls back to an $H$-invariant $f_0\in L^2\left(G\right),$
$$f_0=\tilde{f_0}=\sum_{\alpha \in \widehat{G}} \sum_{i,j=1}^{\text{dim}\,\D^{\alpha}}c_{i,j}^{\alpha}\tilde{\D}_{i,j}^{\alpha}\,,$$
and each function $\tilde{\D}_{i,j}^{\alpha}$ can be expressed in terms of the functions $f_k^{\alpha}.$ By construction, $\left\{f^{\alpha}_1\ldots, f^{\alpha}_{n\left(\alpha\right)}\right\}$ is an orthonormal set. All that remains to be shown is that the functions obtained for different values of $\alpha$ are orthogonal. This follows from the next lemma and the Peter--Weyl Theorem.

\begin{lemma} Fix $\alpha\in \widehat{G}$ and let $f\in \mathrm{Span}\{\D_{ij}^{\alpha}\}$; that is, $f$ is a linear combination of the matrix coefficients corresponding to the representation $\alpha.$  Then for any $h\in G,$ the function $h\cdot f:G\to \R$ given by $\left(h\cdot f\right)\left(g\right)=f\left(h^{-1}g\right)$ is also contained in $\text{Span}\{\mathcal{D}_{ij}^{\alpha}\}\,.$  
\end{lemma}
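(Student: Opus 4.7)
The plan is to reduce to the case $f=\mathcal{D}^{\alpha}_{ij}$ by linearity and then to exploit the defining property of a representation, namely that $\mathcal{D}^{\alpha}:G\to \mathrm{GL}(V_\alpha)$ is a group homomorphism. Since the claim is that a certain subspace is closed under the left regular action $h\cdot f(g)=f(h^{-1}g)$, and the generators of that subspace are indexed precisely by the matrix entries of $\mathcal{D}^{\alpha}$, one expects the homomorphism property to immediately write a translate of a matrix coefficient as a linear combination of the same matrix coefficients.

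Concretely, I would first observe that the map $f\mapsto h\cdot f$ is linear, so it suffices to prove $h\cdot \mathcal{D}_{ij}^{\alpha}\in \mathrm{Span}\{\mathcal{D}_{ij}^{\alpha}\}$ for each pair $i,j$. Applying the representation property $\mathcal{D}^{\alpha}(h^{-1}g)=\mathcal{D}^{\alpha}(h^{-1})\,\mathcal{D}^{\alpha}(g)$ and reading off the $(i,j)$ entry of the matrix product yields
\[
(h\cdot \mathcal{D}_{ij}^{\alpha})(g)=\mathcal{D}_{ij}^{\alpha}(h^{-1}g)=\sum_{k=1}^{\dim\mathcal{D}^{\alpha}} \mathcal{D}_{ik}^{\alpha}(h^{-1})\,\mathcal{D}_{kj}^{\alpha}(g).
\]
For fixed $h$, the coefficients $\mathcal{D}_{ik}^{\alpha}(h^{-1})$ are scalars, so the right-hand side is manifestly a linear combination of the functions $\mathcal{D}_{kj}^{\alpha}$, all of which lie in $\mathrm{Span}\{\mathcal{D}_{ij}^{\alpha}\}$. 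This establishes the claim for a single matrix coefficient, and then linearity extends it to an arbitrary $f\in\mathrm{Span}\{\mathcal{D}_{ij}^{\alpha}\}$.

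There is no real obstacle here; the result is essentially a tautology once one unpacks what it means for $\mathcal{D}^{\alpha}$ to be a representation. The only point worth being careful about is the direction of the action (acting by $h^{-1}$ on the argument, rather than by $h$), and ensuring that one uses $\mathcal{D}^{\alpha}(h^{-1}g)=\mathcal{D}^{\alpha}(h^{-1})\mathcal{D}^{\alpha}(g)$ rather than the opposite factorization; this gives left translation acting on the row index while the column index is left free, which is exactly what is needed to keep the output inside the span. The lemma is then precisely the statement that $\mathrm{Span}\{\mathcal{D}_{ij}^{\alpha}\}$ is a left-translation invariant subspace of $L^2(G)$, a well-known feature of the isotypic decomposition underlying the Peter--Weyl theorem invoked earlier.
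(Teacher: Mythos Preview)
Your proof is correct and is essentially identical to the paper's: both reduce by linearity to a single matrix coefficient and then use the homomorphism property $\mathcal{D}^{\alpha}(h^{-1}g)=\mathcal{D}^{\alpha}(h^{-1})\mathcal{D}^{\alpha}(g)$ to write the translate as a linear combination of the $\mathcal{D}^{\alpha}_{kj}$ with scalar coefficients $\mathcal{D}^{\alpha}_{ik}(h^{-1})$. The only difference is the choice of dummy indices.
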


\begin{proof} It is sufficient to show that if $f(g) = [\D^{\alpha}(g)]_{ik}$, then $f(h^{-1}g)$ is a linear combination of matrix entries of $f(g)$. Observe that \[\begin{aligned}
    [\D^{\alpha}(h^{-1}g)]_{ik} &=  [\D^{\alpha}(h^{-1})\D^{\alpha}(g)]_{ik} \\
    &= \left(i^{\text{th}}\text{ row of }\D^{\alpha}(h^{-1})\right)\cdot \left(k^{\text{th}}\text{ column of }\D^{\alpha}(g)\right) \\
    &= \sum_{j}[\D^{\alpha}(h^{-1})]_{ij}[\D^{\alpha}(g)]_{jk} \qedhere
\end{aligned}\]
\end{proof}

\bibliographystyle{elsarticle-num}
\bibliography{refs}

\end{document}